\newcommand{\copyrightnote}[2]{{\renewcommand{\thefootnote}{}
 \footnotetext{\small\it
\begin{flushleft}
 \copyright \ #1   #2  
\end{flushleft}}}}
\newcommand{\Name}[1]{\begin{flushleft}
                       \LARGE \bf #1
                       \end{flushleft}\vspace{-3mm}}
\newcommand{\Author}[1]{\begin{flushleft}
                       \it #1 \end{flushleft}}
\newcommand{\Address}[1]{\begin{flushleft}
                       \it #1 \end{flushleft}}
\newcommand{\Date}[1]{\begin{flushleft}
                      \small  \it #1 \end{flushleft}}
\newcommand{\evenhead}{Author \ name}
\newcommand{\oddhead}{Article \ name}
\newcommand\cH{{\mathcal H}}
\newcommand\cS{{\mathcal S}}
\newcommand\cA{{\mathcal A}}
\newcommand\cT{{\mathcal T}}
\newcommand{\field}[1]{\mathbb{#1}}
\newcommand{\C}{\field{C}}
\newcommand{\Z}{\field{Z}}
\newcommand{\N}{\field{N}}
\DeclareMathOperator{\Ima}{Im}
\newtheorem{theorem}{Theorem}[section]
\newtheorem{corollary}{Corollary}[section]
\renewcommand{\@evenhead}{
\hspace*{-3pt}\raisebox{-15pt}[\headheight][0pt]{\vbox{\hbox to \textwidth
{\thepage \hfil \evenhead}\vskip4pt \hrule}}}
\renewcommand{\@oddhead}{
\hspace*{-3pt}\raisebox{-15pt}[\headheight][0pt]{\vbox{\hbox to \textwidth
{\oddhead \hfil \thepage}\vskip4pt\hrule}}}
\renewcommand{\@evenfoot}{}
\renewcommand{\@oddfoot}{}
\long\def\@makecaption#1#2{%
  \vskip\abovecaptionskip
  \sbox\@tempboxa{\small \textbf{#1.}\ \ #2}%
  \ifdim \wd\@tempboxa >\hsize
    {\small \textbf{#1.}\ \ #2}\par
  \else
    \global \@minipagefalse
    \hb@xt@\hsize{\hfil\box\@tempboxa\hfil}%
  \fi
  \vskip\belowcaptionskip}
\newcommand{\JNMPnumberwithin}[3][\arabic]{%
  \@ifundefined{c@#2}{\@nocounterr{#2}}{%
    \@ifundefined{c@#3}{\@nocnterr{#3}}{%
      \@addtoreset{#2}{#3}%
      \@xp\xdef\csname the#2\endcsname{%
        \@xp\@nx\csname the#3\endcsname .\@nx#1{#2}}}}%
}
\renewenvironment{proof}[1][\proofname]{\par
  \normalfont
  \topsep6\p@\@plus6\p@ \trivlist
  \item[\hskip\labelsep\textbf{%
    #1\@addpunct{.}}]\ignorespaces
}{%
  \qed\endtrivlist
}
\newcommand{\resetfootnoterule} {
  \renewcommand\footnoterule{%
  \kern-3\p@
  \hrule\@width.4\columnwidth
  \kern2.6\p@}
}
\renewcommand{\footnoterule}{}
\theoremstyle{definition}
\begin{document}

\renewcommand{\evenhead}{ {\LARGE\textcolor{blue!10!black!40!green}{{\sf \ \ \ ]ocnmp[}}}\strut\hfill E Peroni and J P Wang}
\renewcommand{\oddhead}{ {\LARGE\textcolor{blue!10!black!40!green}{{\sf ]ocnmp[}}}\ \ \ \ \   Hamiltonian and recursion operators discrete
Kaup-Kupershmidt}

%%%% Matter for the first page 
\thispagestyle{empty}
\newcommand{\FistPageHead}[3]{
\begin{flushleft}
\raisebox{8mm}[0pt][0pt]
{\footnotesize \sf
\parbox{150mm}{{Open Communications in Nonlinear Mathematical Physics}\ \ \ \ {\LARGE\textcolor{blue!10!black!40!green}{]ocnmp[}}
\quad Special Issue 1, 2024\ \  pp
#2\hfill {\sc #3}}}\vspace{-13mm}
\end{flushleft}}

\FistPageHead{1}{\pageref{firstpage}--\pageref{lastpage}}{ \ \ }

\strut\hfill

\strut\hfill

\copyrightnote{The author(s). Distributed under a Creative Commons Attribution 4.0 International License}

\begin{center}
%{\Large  {\sf This article is part of a Special Issue in Memory of Professor Decio Levi}}
{  {\bf This article is part of an OCNMP Special Issue\\ 
\smallskip
in Memory of Professor Decio Levi}}
\end{center}

\smallskip

\Name{Hamiltonian and recursion operators for a discrete analogue of the Kaup-Kupershmidt equation}

\Author{Edoardo Peroni and Jing Ping Wang}

\Address{School of Mathematics, Statistics \& Actuarial Science, University of Kent, UK
}

\Date{Received July 5, 2023; Accepted November 29, 2023}

\setcounter{equation}{0}

\begin{abstract}

\noindent 
In this paper we study the algebraic properties of a discrete analogue of the Kaup-Kupershmidt equation. This equation can be seen as a
deformation of
		the modified Narita-Itoh-Bogoyavlensky equation and has the
Kaup-Kupershmidt equation in its continuous limit.
Using its Lax representation we explicitly construct a recursion operator for this equation
 and prove that it is a Nijenhuis operator. Moreover,
we present the bi-Hamiltonian structures for this equation.

\end{abstract}

\label{firstpage}

%%%% The Article text starts here

\section{Introduction}
In a recent paper \cite{15}, Mikhailov, Novikov and Wang formulated necessary
integrability conditions for evolutionary differential-difference equations (D$\Delta$Es) using the perturbative symmetry approach and
obtained a family of new integrable D$\Delta$Es:
\begin{eqnarray}\label{squareq}
u_t=(1+u^2)(u_p\prod_{k=1}^{p-1}\sqrt{1+u_k^2}-u_{-p}\prod_{k=1}^{p-1}\sqrt{1+u_
{-k}^2}), \quad  p\in \mathbb{N},
\label{initialeq}
\end{eqnarray}
where $u=u(n,t)$ is a complex function of one discrete variable $n\in\Z$ and one
continuous independent variable $t$. Here
we use the standard notations
\begin{eqnarray*}
 u_t=\partial_t(u), \quad u_k=\cS^k u(n,t)=u(n+k,t), \quad k\in \Z
\end{eqnarray*}
and $\cS$ is the shift operator.
Equation (\ref{squareq}) can be viewed as an inhomogeneous deformation of the
Narita-Itoh-Bogoyavlensky (NIB) equation \cite{30}. Indeed, after the re-scaling
\begin{equation}\label{scaling}
u_k\to \varepsilon^{-1} u_k,\quad\ t\to \varepsilon^{p+1}t
\end{equation}
and in the limit $\varepsilon\to
0$ it turns into
\begin{equation}
u_t=u^2(\prod_{k=1}^{p} u_k-\prod_{k=1}^{p} u_{-k})  \label{mNIB}.
\end{equation}
The authors of \cite{15} have proved the integrability of (\ref{squareq}) for any
$p\in \mathbb{N}$ by constructing its Lax representation and provided the $r$
matrix approach to produce the hierarchy of commuting symmetries. They showed
that
equation (\ref{squareq})
possesses a Lax representation  $L_t=[A, L]$ with a rational (pseudo-difference)
operator $L=Q^{-1}P$, where
\begin{eqnarray}\label{lax}
 Q=u - u_1 w \cS^{-1}, \quad P=(u w_1 \cS-u_1) \cS^{p-1},\qquad  w=\sqrt{1+u^2},
\end{eqnarray}
and a skew-symmetric difference operator $A=L_{+}-(L_{+})^{\dagger}$. Here $L_+$
denotes the polynomial part of the Laurent series
for the rational operator $L$ and $\dagger$ denotes the adjoint of an operator. The
hierarchy of commuting symmetries for \eqref{squareq} can be generated by
\begin{eqnarray}\label{symp}
 L_{t_l}=[\pi_{+}(L^l), L], \qquad
\pi_{+}(L^l)=\left(L^l\right)_{+}-\left(\left(L^l\right)_{+}\right)^{\dagger},\quad l\in \N,
\end{eqnarray}
where $t_1=t$ corresponds to the equation itself.

For every member $p$ of the family \eqref{initialeq}, if we use the substitution
\begin{equation}\label{conlp}
 \begin{split}
  & u(n,t) = \frac{2 \sqrt{p}}{p+1}{\rm{i}} - \frac{(p-1)^2 }{12  (p+1)} \sqrt{p}\ {\rm{i}}\epsilon^2 U\left(x + A(p) \epsilon t, \tau + B(p) \epsilon^5 t\right),\qquad  x=\epsilon n, \\
  & A(p) = -2 p \left(\frac{p-1}{p+1}\right)^{p}, \quad  B(p) = \frac{p^3 (p^2-1)}{180}  \left(\frac{p-1}{p+1}\right)^{p},
 \end{split}
\end{equation}
where ${\rm i}^2=-1$ is the imaginary unit, we always obtain the Kaup-Kupershmidt
equation
\begin{equation*}
		U_\tau = 5 U^2 U_x + \frac{25}{2}  U_x U_{xx} + 5 U U_{xxx} +
U_{xxxxx}\label{KK}
	\end{equation*}
	for a function $U(\tau, x)$ as $\epsilon\rightarrow 0$. This is analogous to what happens for the NIB family, whose members have a continuous limit in the Korteweg-de Vries equation.

This paper is devoted to the study of recursion operators and Hamiltonian
structures of the equation \eqref{squareq} for the case when $p=2$, that is,
	\begin{equation}
		u_t =(1+u^2) (u_{2}
\sqrt{1+u_{1}^2}-u_{-2}\sqrt{1+u_{-1}^2}),
		\label{1.1}
	\end{equation}
which was first obtained in the symmetry classification of a class
of five-point differential-difference equations \cite{49}. Its integrability
was established
in \cite{50}, where Garifullin and Yamilov constructed a Lax representation for
(\ref{1.1}), which is in a different form from (\ref{lax}) when $p=2$ (see Remark
3 in \cite{15} for the comparison).
As shown also in \cite{50}, the continuous limit of \eqref{1.1} is the Kaup-Kupershmidt
equation using the substitution
\begin{equation*}
		u(n, t) = \frac{2\sqrt{2}}{3} {\rm{i}} - \frac{\rm{i}}{18
\sqrt{2}} \epsilon^2 U\left(x - \frac{4}{9} \epsilon t, \tau
+\frac{2}{135}\epsilon^5 t \right), \quad x=\epsilon n,
	\end{equation*}
which is the case of (\ref{conlp}) when $p=2$.

From the formulas (\ref{symp}), we are able to explicitly compute its second member of the hierarchy for $p=2$, namely,
	\begin{equation}
		\begin{split}
			&u_{t_2} =w^2 (u_4 w_1 w_2^2 w_3+u_1 u_2 u_3 w_1 w_2+u u_2^2 w_1^2+u_{-1} u_1 u_2 w w_1 \\& \qquad
			-u_{-2} u_{-1} u_1 w_{-1} w-u_{-2}^2 u w_{-1}^2  -u_{-3} u_{-2} u_{-1} w_{-2} w_{-1}-u_{-4} w_{-3} w_{-2}^2 w_{-1}),
		\end{split} \label{5.9}
	\end{equation}
which can also be obtained via $u_{t_2}=\mathfrak{R} \left(u_t \right)$.
Here $\mathfrak{R}$ is defined by
	\begin{eqnarray}
			&& \mathfrak{R} =
\mathfrak{R}^{(1)}+\mathfrak{R}^{(3)},\label{RecursionInIntro}\\
			&& \mathfrak{R}^{(1)} =w^2 \left(\left( {u_1}- \mathcal{S}^{-1}
u\right) \left(\mathcal{S}w - w \mathcal{S}^{-1}\right)^{-1} \left(  u
\mathcal{S} + u_1\right)+\mathcal{S}w \mathcal{S}
+\mathcal{S}^{-1} w \mathcal{S}^{-1}\right)\frac{1}{w^2};\nonumber \\
			&& \mathfrak{R}^{(3)} =w^2\left(\left(
{u_1}-\mathcal{S}^{-1} u \right) \left(\mathcal{S}w - w
\mathcal{S}^{-1}\right)^{-1} \left(u \mathcal{S}-u_1\right) + \mathcal{S}
w\mathcal{S} -  \mathcal{S}^{-1}w \mathcal{S}^{-1}\right)\nonumber\\
&&\qquad \quad u\left(\mathcal{S}^2+\mathcal{S }+1\right) (\mathcal{S}^2-1)^{-1}\frac{u}{w^2} \nonumber
	\end{eqnarray}
In this paper we are going to prove that the operator $\mathfrak{R}$ is a Nijenhuis recursion operator for \eqref{1.1}. Note that under the same rescaling (\ref{scaling}) and the
limit in $\epsilon$, the operator \eqref{RecursionInIntro} goes to
\begin{equation*}
 u^2 \mathcal{S}^{-1}\left(\mathcal{S}^3 -1\right) u \left( \mathcal{S} u- u
\mathcal{S}^{-1}\right)^{-1} u\left(1 - \mathcal{S}^{-3}\right)
 \mathcal{S} u\left(\mathcal{S}^2+\mathcal{S}+1\right)\left(u \mathcal{S}^2-u
\right)^{-1},
	\end{equation*}
which is the recursion operator of \eqref{mNIB} for $p=2$ in \cite{7}
reflecting the fact that equation \eqref{1.1} is its inhomogeneous
deformation.
The derivation of $\mathfrak{R}$ is based on the structures of Lax
representations for the symmetry hierarchy proposed in
\cite{48}
for partial differential equations. This approach was soon adapted for
D$\Delta$E in \cite{58} and further developed in \cite{7,1} taking into account
the reduction groups of Lax representations \cite{31}.
The detailed derivation will be described in Section \ref{Sec2}, where
we also prove that the operator $\mathfrak{R}$ is Nijenhuis since it can
be written as a rational operator of a preHamiltonian pair \cite{4, 41}.
In the same section, we represent the recursion operator in a new form
involving the action of the reflection operator $\mathcal{T}$, which is also an automorphism defined by
$\mathcal{T} u_n = u_{-n}$, that is,
	\begin{equation}\label{RT}
		\begin{split}
			\mathfrak{R}' & = w^2 \left(1-\mathcal{T}\right) \Bigg[
u_1 \left( \mathcal{S} w - w \mathcal{S}^{-1} \right)^{-1} \left( \left( u
\mathcal{S}+u_1 \right)\frac{1}{u}\left( \mathcal{S}^2-1\right) + u
u_1\left(\mathcal{S}^3-1\right) \right)+ \\ & + w_1 \mathcal{S}^2 \left(\frac{1}{u}\left(\mathcal{S}^2-1\right)+u
\left(\mathcal{S}^2+\mathcal{S}+1\right)\right) \Bigg]
(\mathcal{S}^2-1)^{-1}\frac{u}{w^2} .
		\end{split}
	\end{equation}
The presence of the operator $\left(1- \mathcal{T}\right)$
in the expression above reveals that all the symmetries in the hierarchy are odd
with respect to $\mathcal{T}$. Mathematically, operators $\mathfrak{R}$ and $\mathfrak{R}'$
are not equal while they give the same values on the image of $\left(1- \mathcal{T}\right)$. Thus they generate
the same hierarchy of symmetries starting from the equation itself \eqref{1.1}.

In section \ref{Sec2bis}, we explicitly present a Hamiltonian operator
\begin{equation}
		\mathcal{H} = w^2 \left(\mathcal{S} w \mathcal{S} -
\mathcal{S}^{-1}w \mathcal{S}^{-1} \right) w^2+ w^2\left(u_1-\mathcal{S}^{-1}
u\right) \left(\mathcal{S} w - w \mathcal{S}^{-1}\right)^{-1} (u
\mathcal{S}-u_1) w^2 .
	\end{equation}
Equation \eqref{1.1} is Hamiltonian with respect to it.
To prove that $\mathcal{H}$ is Hamiltonian in theorem \ref{theo4}, we apply the formalism
of rational and pre-Hamiltonian operators described in \cite{4, 42}.
Combining the expression of the Hamiltonian operator
$\mathcal{H}$ and the Nijenhuis recursion operator $\mathfrak{R}$ we show that
 $\mathcal{K} = \mathfrak{R}\mathcal{H}$ is also a Hamiltonian operator compatible with $\mathcal{H}$.
 Recursively we can prove the same result for all $\mathfrak{R}^k\mathcal{H}, k\in \mathbb{N}.$
 To increase the readability, we begin with recalling some basic definitions such as symmetries, recursion operators and Hamiltonian operators for integrable D$\Delta$Es, and
 end with a short discussion on the relevant open problems.

\section{Basic definitions for differential-difference equations}\label{Sec1}
	In this section we will give the definitions for some basic
concepts such as symmetries, recursion operators and Hamiltonian operators for integrable D$\Delta$Es.
By no means this is a complete overview of it. We only
mention the properties and theorems that we are going to use later.
More details on these concepts and related theories for both differential and differential-difference equations
can be found in \cite{52, 19} and a recent book \cite{b12}.

Let $u$ be  a function of a discrete variable $n \in
\mathbb{Z}$ and of a continuous variable $t$.
An evolutionary differential-difference
equation of dependent variable $u$ is of the form
	\begin{equation}
		u_t = f(u_p,\ldots,u_q),\qquad p\le q,\ \ p,q\in\mathbb{Z}.
		\label{1.3}
	\end{equation}
It is an abbreviated form to encode the infinite
sequence of ordinary differential systems of equations
\begin{equation*}
 \partial _t u(n,t)=f(u(n+p,t),\ldots,u(n+q,t)),\qquad
n\in\mathbb{Z}.
\end{equation*}
For convenience we drop the subscript when it is zero as $u_0 = u$.
We say function $f$ is of order $(p, q)$ if $\partial_{u_p} f\neq 0$ and $\partial_{u_q} f\neq 0$.
The difference between $p$ and $q$, $q-p$, is referred to as the total order of $f$. For example, equation \eqref{1.1} is of order $(-2, 2)$
with total order $4$.

All such functions, depending on a finite number of shifts of $u$, form a difference ring denoted $\mathcal{A}$ with the shift operator $\cS$ as its
automorphism defined by
$$
\cS:a(u_p, \cdots, u_q)\mapsto a(u_{p+1}, \cdots, u_{q+1}),\quad \cS:\alpha\mapsto \alpha; \quad a\in \cA,  \alpha\in \C.
$$
The reflection $\cT$ of the lattice $\Z$
defined by
\begin{equation}\label{autoT}
\cT:a(u_p, \cdots, u_q)\mapsto a(u_{-p}, \cdots, u_{-q}),\quad \cT:\alpha\mapsto \alpha; \quad a\in \cA,  \alpha\in \C
\end{equation}
is another automorphism of $\cA$. The composition $\cS\cT\cS\cT= {\rm Id}$ is the identity map.

On a difference ring $\mathcal{A}$ we can define derivations. A derivation (vector field) is said to be evolutionary if it commutes with the shift operator
$\cS$. Such vector field is completely determined by a function $g \in \mathcal{A}$. We call it the
characteristic of the vector field. Evolutionary equation \eqref{1.3} corresponds to an evolutionary vector field of characteristic $f$.

For any two evolutionary vector fields with characteristics $f$ and $g$, we define a Lie bracket
as follows
\begin{equation*}
		[f,g] = g_*(f)-f_*(g),
		\label{r5}
	\end{equation*}
where $
		f_* = \sum_{j \in \mathbb{Z}} \frac{\partial f}{\partial u_j}
\mathcal{S}^j $
is the Fr\'echet derivative of a function $f$.  We say that an evolutionary vector field with characteristic $g\in \mathcal{A}$ is a symmetry of equation \eqref{1.3}
if and only if $[f, g ] = 0$. If an equation possesses infinitely many commuting higher order symmetries, it is said to be integrable.

Often the symmetries of integrable equations can be generated by recursion operators \cite{55}.
Roughly speaking, a recursion operator is a linear operator $\mathfrak{R}$ mapping a symmetry to
a new symmetry. For an evolutionary equation \eqref{1.3}, it satisfies
	\begin{equation}
		\mathfrak{R}_*[f]=[f_*, \mathfrak{R}],
		\label{1.4}
	\end{equation}
	where $\mathfrak{R}_*[f]$  is the Fr\'echet derivative of $\mathfrak{R}$ along the evolutionary vector field  of characteristic $f$.

Recursion operators for nonlinear integrable equations are often Nijenhuis operators, that is,
\begin{equation}
		[\mathfrak{R} a, \mathfrak{R} b] - \mathfrak{R} [\mathfrak{R} a, b] - \mathfrak{R}[a, \mathfrak{R} b] + \mathfrak{R}^2[a, b] = 0, \quad \mbox{ for all $a, b \in \cA$.}
		\label{Nijen1}
	\end{equation}
The Nijenhuis property ensures that recursion operators generate commuting symmetries.

We now introduce briefly the concept of Hamiltonian operators for D$\Delta$E.
For any element $h \in \mathcal{A}$, we define an equivalent class (or a functional) $\int h$ by
saying that two elements $h, g \in \mathcal{A}$ are equivalent if $h -g \in {\rm Im}(\cS -1)$. The space of functionals
is denoted by $\mathcal{A}'$, which does not inherit a ring structure of $\mathcal{A}$.
For any functional $\int h \in \cA'$ (simply written $h \in \cA'$ without confusion), we define its difference variational derivative (Euler operator) denoted by
$\delta_u f$ as
	\begin{equation*}
		\delta_u f  =\sum_{i \in \mathbb{Z}} \mathcal{S}^{-i}
\frac{\partial }{\partial u_i} f .
	\end{equation*}
We say an evolutionary equation \eqref{1.3} is a
Hamiltonian equation if there exists a Hamiltonian operator $\cH$ and a Hamiltonian $h\in \cA'$ such that
	\begin{equation*}
		u_t = \mathcal{H}\, \delta_u  h .
		\label{Ham1}
	\end{equation*}
A difference operator $\cH$ is called a Hamiltonian if the bracket defined on $\cA'$ by
$\{\int\!\! f, \int\!\! g\}=\int\! \delta_u f\ \cH \delta_u g$ is a Lie bracket, that is, the bracket is skew-symmetric and satisfies the Jacobi identity.
Often Hamiltonian operators are rational instead of local.
We will follow the pre-Hamiltonian formalism developed in \cite{4, 42} to give the definition.

	A difference operator $A$ is said to be pre-Hamiltonian if there exists a two form on $\cA$ denoted by $\omega_A$ such that
	\begin{equation}
		A_*[A a](b) - A_*[A b](a) = A (\omega_A (a,b)), \quad \mbox{ for all $a, b \in \cA$.}
		\label{Ham4}
	\end{equation}
The Fr\'echet derivative of a difference operator $A(\cdot)$ is a bi-difference
operator $A_*[\cdot](\cdot)$, by convention we indicate with round parenthesis
the argument of the operator itself and with square bracket the direction of the Fr\'echet derivative determined by the characteristic of the evolutionary vector field.

Let $\cH=A B^{-1}$ be a skew-symmetric rational operator, where both $A$ and $B$ are difference operators. We say that $\cH$ is Hamiltonian if
$A$ is pre-Hamiltonian with $\omega_A$ determined by \eqref{Ham4} and $B$ satisfies
	\begin{equation}
		B_*[A a](b)-B_*[A b](a)+B_*(a)^\dagger[A b]+A_*(a)^\dagger[B b]
= B (\omega_A(a,b)) ,
		\label{Ham5}
	\end{equation}
where $A^\dagger$ is the adjoint of the difference operator $A$.
Given two pre-Hamiltonian operators $A,B$ they form a pre-Hamiltonian pair
$(A,B)$ if any linear combination $A + \zeta B$ of them is a pre-Hamiltonian
operator for any number $\zeta \in \mathbb{C}$.

Pre-Hamiltonian operators are also closely linked with Nijenhuis
operators.
	It has been proven in \cite{4, 41} that any Nijenhuis operator that can be
written in a minimal rational decomposition $\mathfrak{R} = A B^{-1}$ with $B$
pre-hamiltonian generates a pre-Hamiltonian pair $(A,B)$ and, vice-versa, any
pre-hamiltonian pair $(A,B)$ generates a Nijenhuis operator as $\mathfrak{R} = A B^{-1}$.
This provides an efficient way to prove that a certain rational operator is
Nijenhuis by checking if its factors form a pre-Hamiltonian pair, instead of vanishing of
the Nijenhuis torsion \eqref{Nijen1}.

\section{Construction of recursion operators}\label{Sec2}
For an integrable evolutionary equation, if it is known its Lax representation, there is a powerful approach to construct its recursion operator \cite{48}.
This idea has been extended for Lax
pairs that are invariant under the reduction groups \cite{7,1}.
In this section,  we use the same idea to construct a recursion operator
of an equation \eqref{1.1} from its Lax representation.

We rewrite the scalar Lax representation
\eqref{lax} for $p=2$ in the matrix form with the auxiliary system
	\begin{equation*}
		\mathcal{S} \Phi = U(\lambda) \Phi \qquad \Phi_t = V(\lambda) \Phi,
		\label{newaux}
	\end{equation*}
where $\lambda \in \mathbb{C}$ is the spectral parameter and
	\begin{subequations}
		\begin{gather}
			U(\lambda)=\begin{pmatrix}
				\frac{u_1}{u w_1} & \frac{\lambda}{w_1} &
-\lambda \frac{u_1 w}{u w_1}\\
				1 & 0 & 0\\
				0 & 1 & 0
			\end{pmatrix} \label{2.71}\\
			V(\lambda)=\begin{pmatrix}
				\lambda +\frac{u_2 w_1}{u} & -u u_1  &
-\left(w+\lambda\frac{u_2 w w_1}{u}\right)\\
				\frac{u_1 w^2}{u} +
\frac{1}{\lambda}\frac{u_{-1} w}{u} & \lambda - \frac{1}{\lambda} &
-\left(\frac{u_{-1} w^2}{u} + \lambda\frac{u_1 w}{u}\right)\\
				w+\frac{1}{\lambda}\frac{u_{-2} w w_{-1}}{u}& u
u_{-1} & -\left(\frac{1}{\lambda} +\frac{u_{-2} w_{-1}}{u}\right)
			\end{pmatrix}. \label{2.72}
		\end{gather}
		\label{2.7}
	\end{subequations}
The compatibility condition of $U(\lambda)$ and $V(\lambda)$, simply writing as $U$ and $V$, leads to
	\begin{equation*}
		U_t = \mathcal{S}(V)U - U V.
		\label{2.5}
	\end{equation*}
Notice that matrices $U(\lambda)$ and $V(\lambda)$ are invariant with respect to the
transformation:
	\begin{equation}
		 V(\lambda)=  - \mathcal{J} \mathcal{T} V(\lambda^{-1})
\mathcal{T} \mathcal{J},\qquad \cS^{-1} U^{-1}(\lambda)\cS=  \mathcal{J} \cT U(\lambda^{-1}) \cT  \mathcal{J},
		\label{2.9bis}
	\end{equation}
where the reflection operator $\mathcal{T}$ is defined by \eqref{autoT}.
	The symbol $\mathcal{J}$ stands for a ``reflected'' identity matrix
	\begin{equation*}
		\mathcal{J} = \begin{pmatrix}
			0 & 0 & 1\\
			0 & 1 & 0\\
			1 & 0 & 0\\
		\end{pmatrix},
		\label{2.8tris}
	\end{equation*}
which is sometimes called the backward identity or
exchange matrix. The transformation (\ref{2.9bis}) reflects the symmetry $\cT(u_t)=-u_t$ of the
equation (\ref{1.1}). The primitive automorphic function
of the group generated by the transformation $\lambda\mapsto \lambda^{-1}$ is
	\begin{equation}
		\mu(\lambda) = \lambda + \lambda^{-1}
		\label{prim}
	\end{equation}
For a given matrix $U$, we can build up a hierarchy of nonlinear systems by
choosing a matrix $V'$ with the degree of $\lambda$ from $-l$ to $l$.
The way to construct a recursion operator directly from a Lax representation
is to relate the different operators ${\bar V'}$ using the ansatz $${\bar
V'}=\mu (\lambda) V' +B$$ and then to find the relation between the two flows corresponding to
$\bar V'$ and $V'$. Here $B$ is the remainder and we assume that it
has the same symmetry as $V$, and thus is of the form:
	\begin{equation}
		B=\begin{pmatrix}
			b^{(1)}_{    11} & b^{(1)}_{    12} & b^{(1)}_{    13}\\
			b^{(1)}_{    21} & b^{(1)}_{    22} & b^{(1)}_{    23}\\
			b^{(1)}_{    31} & b^{(1)}_{    32} & b^{(1)}_{    33}
		\end{pmatrix} \lambda +
		\begin{pmatrix}
			b^{(0)}_{    11} & b^{(0)}_{    12} & b^{(0)}_{    13}\\
			b^{(0)}_{    21} & b^{(0)}_{    22} &
-\mathcal{T}b^{(0)}_{    21}\\
			-\mathcal{T}b^{(0)}_{    13} & -\mathcal{T}b^{(0)}_{
12} & -\mathcal{T}b^{(0)}_{    11}
		\end{pmatrix}-
		\begin{pmatrix}
			\mathcal{T} b^{(1)}_{    33} &  \mathcal{T} b^{(1)}_{
  32} &\mathcal{T}b^{(1)}_{    31}\\
			\mathcal{T} b^{(1)}_{    23} &  \mathcal{T} b^{(1)}_{
  22} &\mathcal{T}b^{(1)}_{    21}\\
			\mathcal{T} b^{(1)}_{    13} &  \mathcal{T} b^{(1)}_{
  12} &\mathcal{T}b^{(1)}_{    11}
		\end{pmatrix} \lambda^{-1}
		\label{3.6b}
	\end{equation}
	We need to stress that, due to the action of the reduction group, there
is one more relation we should consider
	\begin{equation}
		b^{(0)}_{    22} = -\mathcal{T}b^{(0)}_{    22}.
		\label{5.1bis}
	\end{equation}
Assume that $V'$ associates to the symmetry flow $u_{\tau}$ and $\bar V'$ to the flow $u_t$.
The compatibility condition leads to
	\begin{equation}
		U_t = \mu(\lambda) U_\tau + S(B) U - U B .
		\label{3.4b}
	\end{equation}
	When we apply these assumptions to \eqref{3.4b}, we get a over-determined system counting
34 equations of 14 parameters. We proceed considering some of these equations to find a
representation of $\mathfrak{R}$ and at the same time checking the others as the
compatibility conditions to ensure the existence of a solution, which leads to the following statement:
	\begin{theorem}\label{theo2}
		A recursion operator of the equation \eqref{1.1} is given by (\ref{RecursionInIntro}).
\iffalse
		\begin{subequations}
			\begin{align}
				& \mathfrak{R} =
\mathfrak{R}^{(0)}+\mathfrak{R}^{(1)}+\mathfrak{R}^{(2)}\\
				& \mathfrak{R}^{(0)} =w^2\left(\mathcal{S}w
\mathcal{S} +\mathcal{S}^{-1} w \mathcal{S}^{-1} \right) \frac{1}{w^2} \\
				& \mathfrak{R}^{(1)} =w^2\left( {u_1}-
\mathcal{S}^{-1} u\right) \left(\mathcal{S}w - w \mathcal{S}^{-1}\right)^{-1}
\left(  u \mathcal{S} + u_1\right)\frac{1}{w^2} \\
				& \mathfrak{R}^{(2)} =w^2\left(\left(
{u_1}-\mathcal{S}^{-1} u \right) \left(\mathcal{S}w - w
\mathcal{S}^{-1}\right)^{-1} \left(u \mathcal{S}-u_1\right) + \mathcal{S}
w\mathcal{S} -  \mathcal{S}^{-1}w \mathcal{S}^{-1}\right)
u\left(\mathcal{S}^2+\mathcal{S }+1\right) (\mathcal{S}^2-1)^{-1}\frac{u}{w^2}
			\end{align}
			\label{3.13}
		\end{subequations}
\fi
	\end{theorem}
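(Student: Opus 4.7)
The plan is to exploit the compatibility equation \eqref{3.4b} together with the ansatz \eqref{3.6b} and the reduction constraint \eqref{5.1bis}, extracting the recursion operator by matching coefficients of $\lambda$ in the resulting matrix identity and solving for the entries of $B$ in terms of $u_\tau$ (the ``old'' symmetry) and its shifts; the resulting relation $u_t=\mathfrak{R}(u_\tau)$ then reads off $\mathfrak{R}$ directly. The matrix $U$ in \eqref{2.71} is supported in $\lambda^0$ and $\lambda^1$, so its Fr\'echet derivative $U_\tau$ along $u_\tau$ inherits the same support; consequently $\mu(\lambda)U_\tau=(\lambda+\lambda^{-1})U_\tau$ and $\cS(B)U-UB$ both have entries of $\lambda$-degree between $-1$ and $2$, whereas $U_t$ is supported in degrees $0$ and $1$.

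Comparing coefficients of $\lambda^2$ and $\lambda^{-1}$ in \eqref{3.4b}, which receive no contribution from $U_t$, I would first extract purely algebraic linear equations that express the entries of the $\lambda^{\pm 1}$-blocks of $B$ in terms of $u_\tau$ and its shifts; because the $\lambda^{-1}$-block is tied to the $\lambda^1$-block through the reduction symmetry \eqref{2.9bis}, this step fixes both outer blocks explicitly. Inserting the now-known outer blocks into the coefficients of $\lambda^1$ and $\lambda^0$ produces two further families of equations: some are algebraic linear equations for the entries of the middle block $B_0$, others are inhomogeneous equations whose left-hand side contains $U_t$. Solving each of the algebraic equations for an entry of $B_0$ forces the inversion of difference operators of the form $\cS^2-1$ and $\cS w-w\cS^{-1}$, and these inversions are exactly what produce the non-local factors $(\cS^2-1)^{-1}$ and $(\cS w-w\cS^{-1})^{-1}$ seen in \eqref{RecursionInIntro}. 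Substituting back into the inhomogeneous equations expresses the characteristic $u_t$ as a rational difference operator acting on $u_\tau$; after the conjugation by $w^{\pm 2}$ needed to match the normalisation of $U_t$, assembling the local and non-local pieces reproduces the stated formula for $\mathfrak{R}$.

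Since the system is heavily over-determined (34 equations in 14 unknowns), the final and most delicate step is to verify that the equations not used to fix the $b^{(j)}_{ik}$ hold identically once the preceding assignments are made; I expect this consistency check to be the main obstacle. It requires systematic use of the identity $w_k^2=1+u_k^2$, careful bookkeeping of shifts and of the reduction symmetry \eqref{2.9bis}, and the verification that the arguments being inverted through $(\cS^2-1)^{-1}$ and $(\cS w-w\cS^{-1})^{-1}$ lie in the images of the respective operators so that the formal inverses act on meaningful quantities. As an independent sanity check one may compute $\mathfrak{R}(f)$ with $f$ the right-hand side of \eqref{1.1} and verify that it reproduces the symmetry $u_{t_2}$ recorded in \eqref{5.9}; this simultaneously confirms the recursion identity \eqref{1.4} on the distinguished flow and shows that $\mathfrak{R}$ generates the correct member of the hierarchy \eqref{symp} for $p=2$.
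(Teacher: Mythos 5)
Your proposal follows essentially the same route as the paper's proof: substitute the ansatz \eqref{3.6b} into the compatibility condition \eqref{3.4b}, set each coefficient of $\lambda$ to zero, solve the resulting over-determined system (34 equations, 14 parameters) for the entries of $B$ in terms of $u_\tau$, and read off $\mathfrak{R}$ from the remaining equations involving $u_t$, with the unused equations checked as consistency conditions. The only discrepancy is a minor point of bookkeeping: in the paper the non-local inversions $(\cS^{-2}-1)^{-1}$ and $\left(\cS w - w\cS^{-1}\right)^{-1}$ arise in fixing the $\lambda^{1}$-block entries $b^{(1)}_{11}$ and $b^{(1)}_{12}$ (equations \eqref{r11b1} and \eqref{b121}), while the middle-block unknown $b^{(0)}_{22}$ turns out to vanish via \eqref{b220}, rather than the non-localities entering through the $\lambda^{0}$ block as you suggest.
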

	\begin{proof}
		We consider the ansatz in \eqref{3.6b} and apply it to
\eqref{3.4b}. The result is a matrix system involving the spectral parameter
$\lambda$, that allow us to decompose it even further considering that every
coefficient of $\lambda$ needs to be equal to zero independently. By inspection,
some of these relations are trivial. We notice that two parameters are
directly zero
		\begin{equation}
			b_{21}^{(1)}=0 \qquad b_{31}^{(1)}=0
			\label{5.2}
		\end{equation}
		and some are connected with one to one relations, such as
		\begin{equation}
			b^{(1)}_{11}= \mathcal{S} b^{(1)}_{22} \qquad
b^{(1)}_{21}=\mathcal{S}b^{(1)}_{32}=0 .
			\label{5.3}
		\end{equation}
We consider \eqref{3.4b} as the matrix equation $\Delta_{
ij}=0$. Notice that, because of the structure of $U$, all the
equations coming from $i=2, 3$ and $j=1, 2, 3$ do not involve
$t$-derivative and they can be used to reduce the number of parameters.

First we express all parameters in $B$ in terms of 4 parameters, namely,
$b^{(1)}_{11},  b^{(1)}_{12},  b^{(1)}_{33}$ and $b^{(0)}_{22} $, and we have
\begin{equation}
 b^{(1)}_{11}= (\mathcal{S}^{-2}-1)^{-1}\frac{u}{w^2}u_\tau. \label{r11b1}
\end{equation}
We then find
$b^{(1)}_{33}$ and $b^{(1)}_{12}$ in terms of $b^{(1)}_{11}$ as follows:
		\begin{eqnarray}
&& b^{(1)}_{    33} =-\left(\frac{1}{u^2}
(\mathcal{S}^2-1)+\mathcal{S}^2+\mathcal{S}\right) \mathcal{S}^{-2} b^{(1)}_{11}\label{r11bis}\\
&&\left(\mathcal{S} w- w \mathcal{S}^{-1} \right)
b^{(1)}_{    1 2}= u u_1 \left( \left(\mathcal{S}+1\right)
\frac{1}{u^2}\left(\mathcal{S}^2-1\right)+\left(\mathcal{S}^3-1\right)\right)
\mathcal{S}^{-2} b^{(1)}_{11}\label{b121}
		\end{eqnarray}
Finally, we analyse the equations involving $u_t$ and get
		\begin{equation}
			\begin{split}
				\frac{1}{w^2} u_t= & -u\left(\mathcal{S}
^2+\mathcal{S} +1\right) \mathcal{\mathcal{S} }^{-1} b^{(0)}_{    22}   -\Big(\left(\cS w \cS + \cS^{-1} w \cS^{-1}\right)u^{-1} \left(\cS^2 -1\right)+\\ & + \left(\cS w \cS - \cS^{-1} w \cS^{-1}\right) u \left(\cS^2 + \cS + 1 \right)\Big)
\mathcal{\mathcal{S} }^{-2} b^{(1)}_{11} -\left(
{u_1}- \mathcal{S}^{-1} u\right) b^{(1)}_{    12},
			\end{split}
			\label{r12}
		\end{equation}
where the only unknown parameter left is $b^{(0)}_{22}$ satisfies
		\begin{eqnarray}
				&& \left(\left(\mathcal{S}^2+\mathcal{S}+1\right)
u^2+\mathcal{S}^2-1\right) b^{(0)}_{22} =\frac{{u_1} w}{u}\cS^{-1} \left(\mathcal{S} w - w
\mathcal{S}^{-1} \right) b^{(1)}_{12}\nonumber\\&&
\qquad \qquad
-\frac{ {u_1} w}{u} \cS^{-1} \left(u_1
{u} \left(\cS \frac{ {w}^2}{ {u}^2}-\frac{ {w}^2}{
{u}^2}\mathcal{S}^{-2}\right)  +
\left(\frac{u_1}{u}-\frac{u}{u_1}\mathcal{S}^{-1}\right) \right) b^{(1)}_{11} . \label{b220}
		\end{eqnarray}
Substituting \eqref{b121} into \eqref{b220}, we obtain its right-handed side equals zero,
which leads to
\begin{equation*}
b^{(0)}_{22} = 0 .
		\end{equation*}
We now substitute it together with (\ref{r11b1}) and (\ref{b121}) into \eqref{r12}, which leads to
the recursion operator (\ref{RecursionInIntro}) as stated.
\end{proof}
The recursion operator \eqref{RecursionInIntro} can be written in a form involving the reflection operator $\cT$:
we start with an equation for $u_t$ containing the
$\mathcal{T}$-reflection operator, different from \eqref{r12} and
we then solve all the other equations in the same way as before. In this way, we obtain the following form
which is equivalent to \eqref{RecursionInIntro}:
	\begin{equation*}
		\begin{cases}
			&u_t = w^2 (1-\mathcal{T}) u_1
\left(b^{(1)}_{12}+\frac{u_2 w_1}{u_1}
\left(\frac{1}{u_2^2}\left(\mathcal{S}^2-1\right)+\mathcal{S}^2+\mathcal{S}
+1\right) b^{(1)}_{11}\right)\\
			& b^{(1)}_{1 2}= \left( \mathcal{S} w - w
\mathcal{S}^{-1} \right)^{-1} u u_1 \left( \left( \mathcal{S}+1
\right)\frac{1}{u^2}\left( \mathcal{S}^2-1\right) + \mathcal{S}^3-1\right) \cS^{-2}
b^{(1)}_{    11} \\
			& b^{(1)}_{
11}=(\mathcal{S}^{-2}-1)^{-1}\frac{u}{w^2}u_\tau
		\end{cases}
		\label{5.6bis}
	\end{equation*}
which leads to (\ref{RT}).
\iffalse
\begin{equation}
		\begin{split}
			\mathfrak{R} & = w^2 \left(1-\mathcal{T}\right) u_1 \left( \mathcal{S} w - w \mathcal{S}^{-1} \right)^{-1}\Bigg[ \left( \frac{w_1 w_2}{u_2 u_3}\mathcal{S}^3 - \frac{1}{u_1 u} \cS +\frac{u_1}{u}\right)\left( \cS^2 -1\right) + \\ &+\left(\frac{u_3 w_1 w_2}{u_2}\cS^3-\frac{u_1}{u}\cS - u u_1 \right)
\left(\cS^2 + \cS + 1 \right) \Bigg] (\mathcal{S}^2-1)^{-1}\frac{u}{w^2} .
		\end{split}
		\label{3.13bis}
	\end{equation}
\fi
It is obvious that the whole hierarchy is anti-symmetric (odd) under
$\mathcal{T}$ operator as $\mathfrak{R} = w^2 (1-\mathcal{T})
\hat{\mathfrak{R}}$, where $\hat{\mathfrak{R}}$ is a rational operator of $\cS$.
The presence of $\mathcal{T}$ prevents us from directly exploiting a series of
theorems proved for rational operators of $\cS$ in \cite{4}. On the other hand, this
alternative and more compact form for the recursion operator shows an important property of the
symmetries that they are all odd with respect to $\mathcal{T}$.

One of important properties of recursion operators is to generate an infinite hierarchy of commuting symmetries, which
is guaranteed by their Nijenhuis property \eqref{Nijen1}. To show that $\mathfrak{R}$ defined by \eqref{RecursionInIntro} is a
Nijenhuis operator, instead of directly checking the vanishing of the Nijenhuis torsion,
we use the pre-Hamiltonian formalism, in particular Theorems 3
in \cite{4}: if two difference operators $A$ and $B$ form a preHamiltonian pair, then their ratio $AB^{-1}$ is Nijenhuis.
	\begin{theorem} \label{Theorem1}
		The recursion operator $\mathfrak{R}$ in \eqref{RecursionInIntro} is a
Nijenhuis operator
	\end{theorem}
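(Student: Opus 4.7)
The plan is to avoid checking the Nijenhuis torsion \eqref{Nijen1} directly and instead to apply Theorem 3 of \cite{4}, as indicated just before the statement: if $A$ and $B$ are difference operators forming a pre-Hamiltonian pair and $\mathfrak{R}=AB^{-1}$ is a minimal right rational decomposition, then $\mathfrak{R}$ is automatically Nijenhuis. The task therefore splits into two subproblems: (i) producing an explicit minimal decomposition of $\mathfrak{R}$ in \eqref{RecursionInIntro} as a quotient of two local difference operators, and (ii) verifying the pre-Hamiltonian pair property in the sense of \eqref{Ham4}.

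For step (i), I would read the decomposition off the structure of \eqref{RecursionInIntro}: the only denominators that appear are $(\mathcal{S}w-w\mathcal{S}^{-1})^{-1}$ (present in both $\mathfrak{R}^{(1)}$ and $\mathfrak{R}^{(3)}$) and $(\mathcal{S}^2-1)^{-1}$ (only in $\mathfrak{R}^{(3)}$). Pushing both inverses to the right and keeping the polynomial parts on the left produces a presentation $\mathfrak{R}=AB^{-1}$ in which the right factor $B$ carries both denominators and $A$ is a local left factor. Minimality is then checked by showing that $A$ and $B$ have no nontrivial common right factor, which amounts to a short order-count argument on their symbols.

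For step (ii), the cleanest approach is to work with a generic scalar $\zeta$ and compute $(A+\zeta B)_{*}[(A+\zeta B)a](b) - (A+\zeta B)_{*}[(A+\zeta B)b](a)$, showing that it lies in the left image of $A+\zeta B$ with a quotient $\omega_{A+\zeta B}(a,b)$ that is polynomial of degree two in $\zeta$. Expanding in $\zeta$, the $\zeta^{2}$ coefficient encodes the pre-Hamiltonian identity for $B$, the $\zeta^{0}$ coefficient that for $A$, and the $\zeta^{1}$ coefficient is the mixed compatibility identity that upgrades both to a pre-Hamiltonian pair. In practice I would first guess $\omega_{B}$ from the $\zeta^{2}$ term (this calculation is essentially the one needed in Theorem \ref{theo4} for the Hamiltonian operator $\mathcal{H}$, whose right factor is close to $B$), then $\omega_{A}$, and finally verify the mixed identity.

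The principal obstacle is not conceptual but computational. Because $w=\sqrt{1+u^{2}}$, the Fr\'echet derivative of $\mathcal{S}w-w\mathcal{S}^{-1}$ carries the factor $u/w$, and the resulting bi-difference operators are bulky; moreover, the mixed terms $A_{*}[Bb](a)$ and $B_{*}[Ab](a)$ require careful bookkeeping so that the left divisibility by $A+\zeta B$ is manifest. Once the pre-Hamiltonian pair property is established, Theorem 3 of \cite{4} delivers the Nijenhuis property of $\mathfrak{R}$ at once, and the same conclusion transfers to $\mathfrak{R}'$ in \eqref{RT} because $\mathfrak{R}$ and $\mathfrak{R}'$ agree on the image of $1-\mathcal{T}$, where the entire symmetry hierarchy is located.
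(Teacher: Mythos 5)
Your proposal takes essentially the same route as the paper: the authors likewise convert \eqref{RecursionInIntro} into a right fraction $\mathfrak{R}=MN^{-1}$ of local difference operators (pushing the two inverses to the right via the Ore condition, which produces the auxiliary operators $G$ and $I$ in \eqref{II}) and then verify by computer algebra that $M+\lambda N$ is pre-Hamiltonian for every constant $\lambda$, invoking Theorem 3 of \cite{4} exactly as you propose. The only divergences are tactical and inessential: the paper proves pre-Hamiltonianity of $N$ by factoring it through the simple pre-Hamiltonian operator \eqref{Nijen2} and applying the composition criterion of Lemma 3 in \cite{42} rather than your direct $\zeta$-expansion, and your minimality check is superfluous for the direction of Theorem 3 actually used here, since a pre-Hamiltonian pair yields a Nijenhuis ratio without any minimality hypothesis.
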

	\begin{proof}
		The recursion operator in \eqref{RecursionInIntro} can be rewritten as follows:
		\begin{equation}
			\mathfrak{R} = w^2 \left(A + B C^{-1} D\right)
E^{-1}\frac{1}{w^2},
			\label{factorised2}
		\end{equation}
		where each symbol above corresponds to a difference operator
		\begin{equation}
			\begin{split}
				& A = \left(\left(\mathcal{S}w \mathcal{S}
+\mathcal{S}^{-1} w \mathcal{S}^{-1}
\right)\frac{1}{u}\left(\mathcal{S}^2-1\right) +\left( \mathcal{S} w\mathcal{S}
-  \mathcal{S}^{-1}w \mathcal{S}^{-1}\right) u
\left(\mathcal{S}^{2}+\mathcal{S}+1\right)\right)\\
				& B = \left( {u_1}-\mathcal{S}^{-1} u\right) \\
				& C = \left(\mathcal{S}w - w
\mathcal{S}^{-1}\right)\\
				& D =  u u_1\left( \left(  \mathcal{S}+1\right)
\frac{1}{u^2} \left( \mathcal{S}^2-1\right)
+\mathcal{S}^3-1\right)\\
				& E = \frac{1}{u}\left(\mathcal{S}^2-1\right)
			\end{split}
			\label{definition2}
		\end{equation}
		From this setting, it is possible to rearrange the terms such
that the recursion operator appears as a rational operator $\mathfrak{R} = M
N^{-1}$. In doing so we need first to simplify $C^{-1}D$ and we convert it into a right fraction. We know from \cite{4} that the ring of difference operators satisfies the Ore condition, by consequence we have
\begin{equation*}
\begin{split}
C^{-1}D = &  \left({u u_1}\right)^{-1}\left(u_{-1} u_1 w \mathcal{S}^2+u u_2 w_1 \mathcal{S}\right)\\
& + C^{-1} \left((u_1 u_{2}-u u_3 w_1 w_2) \mathcal{S}-(u_{-1} u-u_{-2} u_{1} w_{-1} w) \right)\\
 = &  \left({u u_1}\right)^{-1}\left(u_{-1} u_1 w \mathcal{S}^2+u u_2 w_1 \mathcal{S}\right) + G I^{-1}(u_1 u_{-1})^{-1} \mathcal{S}
\end{split}
\end{equation*}
where operators $G$ and $I$ are given by
\begin{equation}\label{II}
\begin{split}
&G=  p_1\left(p_2\left(p_1 w_1-p_3 w_2 \right)\mathcal{S}- p \left(p_1 w - p_{-1}w_{-1}\right)\right) \qquad \quad p= u u_{-1}-u_1 u_{-2} w w_{-1} \\
&I=  p_2 w\left(p_3 w_2-p_1 w_1\right)\mathcal{S}+\left(p_{-1}p_2 w_{-1} w_1 - p p_1 w^2\right) +p_{-1} w\left(p_{-2} w_{-2}-p w_{-1}\right)\mathcal{S}^{-1}
\end{split}
\end{equation}
This leads to
$$
M= w^2 \left(A I + B \left({u u_1}\right)^{-1}\left(u_{-1} u_1 w \mathcal{S}^2+u u_2 w_1 \mathcal{S}\right) I + B G\right), \qquad
N= w^2 E \mathcal{S}^{-1} u_1 u_{-1} I.
$$
We now prove that $M$ and $N$ form a preHamiltonian pair by checking that the operator
$M+\lambda N$ is preHamiltonian for any constant $\lambda$ using computer algebra Mathematica,
similar to what we did for Theorem 7 in \cite{4}. Here we only present how we prove that
the operator $N$ is a preHamiltonian operator.
We notice that the first part of $N$ given by
		\begin{equation}
            \frac{w^2}{u}\left(
\mathcal{S}^2-1\right)\mathcal{S}^{-1} u_1 u_{-1}
			\label{Nijen2}
		\end{equation}
is a very simple pre-Hamiltonian operator, which we recognise via \eqref{Ham4} defining an associate bi-difference operator
		\begin{equation*}
			\omega(a,b) = \frac{u_{-3} u_1 w_{-1}^2 a
b_{-2}-u_{-3} u_1 w_{-1}^2 a_{-2} b +u_{-1} u_3 w_1^2 a_2 b -u_{-1} u_3 w_1^2 a
b_2}{u_{-1} u_1} .
			\label{Nijen3}
		\end{equation*}
		According to Lemma 3 in \cite{42}, given a generic
pre-Hamiltonian difference operator $P$ with associated bi-difference operator
$\omega_P$, then the operator $P Q$, where $Q$ is a difference operator,
is itself pre-Hamiltonian if and only if the following expression is in the image of
the operator $Q$, i.e.,
		\begin{equation*}
			Q_*[P(Q(a))](b)-Q_*[P(Q(b))](a)+\omega_P(Q(a),Q(b)) \in
\Ima\left(Q\right)
		\end{equation*}
Using computer algebra Mathematica,  we verify this property on operators
\eqref{Nijen2} and $I$ given by \eqref{II}. Thus prove that the operator $N$ is a preHamiltonian operator.
	\end{proof}
\section{Hamiltonian operator}\label{Sec2bis}
	In this section we will show that \eqref{1.1} is a Hamiltonian equation
	via the pre-Hamiltonian theory \cite{4,42}. Let
	\begin{equation}
		\mathcal{H} = w^2 \left(\mathcal{S} w \mathcal{S} - \mathcal{S}^{-1}w
\mathcal{S}^{-1} \right) w^2+ w^2\left(u_1-\mathcal{S}^{-1} u\right)
\left(\mathcal{S} w - w \mathcal{S}^{-1}\right)^{-1} (u \mathcal{S}-u_1) w^2 .
		\label{Hamiltonian2}
	\end{equation}
Note that \eqref{1.1} can be written as $u_t=\mathcal{H} \delta_u \ln(w) $.
We are going to prove that \eqref{Hamiltonian2} is a
Hamiltonian operator.
	\begin{theorem}\label{theo4}
		The operator $\mathcal{H}$ given by \eqref{Hamiltonian2} is
a Hamiltonian operator.
	\end{theorem}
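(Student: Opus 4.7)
The plan is to use the pre-Hamiltonian criterion recalled in the paper: a skew-symmetric rational difference operator $\mathcal{H}=AB^{-1}$ is Hamiltonian provided $A$ is pre-Hamiltonian with some two-form $\omega_A$ satisfying \eqref{Ham4} and $B$ satisfies the compatibility identity \eqref{Ham5} with the same $\omega_A$. The argument therefore splits into three stages: verify skew-symmetry, exhibit a minimal right-fraction decomposition $\mathcal{H}=AB^{-1}$, and then verify the two identities.

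Skew-symmetry of $\mathcal{H}$ is checked directly from \eqref{Hamiltonian2}. The first summand is skew because $(\mathcal{S}w\mathcal{S})^\dagger=\mathcal{S}^{-1}w\mathcal{S}^{-1}$ and the outer $w^2$ factors are self-adjoint; the second summand is skew because $(u_1-\mathcal{S}^{-1}u)^\dagger=-(u\mathcal{S}-u_1)$ combined with $(\mathcal{S}w-w\mathcal{S}^{-1})^\dagger=-(\mathcal{S}w-w\mathcal{S}^{-1})$, so the inverted middle factor contributes two sign changes and the outer $w^2$ factors cancel adjointly.

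For the factorization stage I would write $\mathcal{H}=P+QR^{-1}S$ with
$P=w^2(\mathcal{S}w\mathcal{S}-\mathcal{S}^{-1}w\mathcal{S}^{-1})w^2$,
$Q=w^2(u_1-\mathcal{S}^{-1}u)$,
$R=\mathcal{S}w-w\mathcal{S}^{-1}$ and
$S=(u\mathcal{S}-u_1)w^2$,
and invoke the Ore condition on the ring of difference operators, exactly as in the proof of Theorem \ref{Theorem1}, to convert $R^{-1}S$ into a right fraction $S'R'^{-1}$ with local $S', R'$. This yields $\mathcal{H}=(PR'+QS')R'^{-1}=:AB^{-1}$, a minimal decomposition by local difference operators on which \eqref{Ham4} and \eqref{Ham5} can be tested.

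The main obstacle will then be the explicit verification of \eqref{Ham4} and \eqref{Ham5} for this concrete pair $(A,B)$. Following the strategy developed in \cite{4,42} and already used for Theorem \ref{Theorem1}, I would first compute $A_*[Aa](b)-A_*[Ab](a)$, read off the two-form $\omega_A(a,b)$ from its form, and apply Lemma~3 of \cite{42} to split the pre-Hamiltonian check for $A$ into checks on its simpler building blocks (such as the pre-Hamiltonian factor already isolated in \eqref{Nijen2}); this confirms \eqref{Ham4} together with the explicit $\omega_A$. With that $\omega_A$ in hand, the identity \eqref{Ham5} for $B$ becomes a finite polynomial identity in the shifted variables $u_k$, which is confirmed symbolically in Mathematica, again paralleling the verification used at the end of the proof of Theorem \ref{Theorem1}. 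Identifying the correct $\omega_A$ is the delicate step; once it is pinned down, the remaining identities are mechanical finite checks.
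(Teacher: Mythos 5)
Your proposal follows essentially the same route as the paper's proof: the paper checks $\mathcal{H}+\mathcal{H}^\dagger=0$ directly, writes $\mathcal{H}=w^2\left(F-BC^{-1}B^\dagger\right)w^2$, uses the Ore condition to pass to a right fraction $\mathcal{H}=MN^{-1}$ with $M=w^2(FI+BG)$, $N=w^{-2}I$, obtains the two-form $\omega_M$ (skew-symmetric, of order $(-5,5)$) by the direct computation of \eqref{Ham4} that you describe as your primary plan, and then verifies \eqref{Ham5} symbolically in Mathematica. The only caveat is your optional appeal to Lemma~3 of \cite{42}: that lemma concerns products $PQ$ with $P$ pre-Hamiltonian (as exploited for $N$ in Theorem \ref{Theorem1}) and does not obviously split the \emph{sum} $A=PR'+QS'$, so for this operator the direct determination of $\omega_A$ --- your stated fallback and the paper's actual method --- is the step that carries the proof.
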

	\begin{proof}
		We prove the statement by a direct computation. For simplicity, we write $\cH$ as follows:
		\begin{equation}
			\cH = w^2 \left( F - B C^{-1} B^\dagger \right)w^2,
			\label{HamilDeco}
		\end{equation}
where operators $B, C$ are defined by \eqref{definition2} and  $F$  is defined as
\begin{equation}
   F = \mathcal{S} w \mathcal{S} - \mathcal{S}^{-1}w
\mathcal{S}^{-1}.
\label{Definitions2}
\end{equation}
To show that $\cH$ is a Hamiltonian operator, we
will use the definition given in Section \ref{Sec1}: in particular, we are going to verify identities
\eqref{Ham4} and \eqref{Ham5}. We first note that $\cH$ is skew-symmetric since
$\cH + \cH^\dagger = 0$.
We then need to express $\cH$ in \eqref{Hamiltonian2} as a rational operator. Note that
\begin{subequations}
 \begin{align*}
  & C^{-1} B^\dagger = G I^{-1}\qquad\qquad\qquad\qquad\quad\qquad\;\; p = u_{-1} u w_1 - u_1 u_2 w\\
  & G = u_{-1} w w_1^2 w_2 p_1 \cS -u_2 w_{-1} w^2 w_1  p_{-1} \\
  & I =  w_1^2 w_2 p_1  \cS+ w_{-1} w_1 \left(u^2 w_{-1} w_1-u_{-2} u_2 w^2\right) - w_{-2} w_{-1}^2 p_{-2} \cS^{-1}
 \end{align*}
\end{subequations}
 As we did in theorem \ref{Theorem1}, due to the Ore property of the ring of the difference operators \cite{4}, we obtain
\begin{equation*}
  \cH =  M N^{-1}, \quad M = w^2\left(F I + B G\right); \quad N=w^{-2} I.
\end{equation*}
We now prove that the operator $M$ is pre-Hamiltonian using
\eqref{Ham4}. The calculations are straightforward though not so neat. The two form
 $\omega_M$ is skew-symmetric and of order $(-5,5)$ for any $a, b\in \cA$. Here we explicitly write down the terms in $\omega_M(a,b)$
 involving fifth order shifts for both $a$ and $b$:
\begin{equation*}
    \omega_M^{(5)}(a,b)=-\frac{u_1 w w_2^2 w_3 w_4^2 w_5^2
w_6 \left(u_6 u_7 w_5-u_4 u_5 w_6\right)\left(a_5 b-a b_5\right) }{u_1 u_2
w-u_{-1} u w_1}
\end{equation*}
The last step is straightforward to verify \eqref{Ham5} by substituting
 the expressions of $M$, $N$ and $\omega_M$ into it, which is verified by using
computer algebra Mathematica.
\end{proof}
By now we have obtained a recursion operator
$\mathfrak{R}$ \eqref{RecursionInIntro}, which is Nijenhuis and a Hamiltonian operator
$\mathcal{H}$ \eqref{Hamiltonian2} for equation \eqref{1.1}. We will go further looking for Hamiltonian operators compatible with $\mathcal{H}$. Similar to pre-Hamiltonian operators, two
Hamiltonian operators $\mathcal{H},\mathcal{K}$ are said to be compatible and
form a Hamiltonian pair $\left(\mathcal{H},\mathcal{K}\right)$ if any their linear
combination $\mathcal{H}+\lambda \mathcal{K}$ is a Hamiltonian operator
for any constant $\lambda$.
Theorem 4 in \cite{42} states that the operator
$\mathcal{K} = \mathfrak{R} \mathcal{H}$ obtained from Nijenhuis recursion operator
$\mathfrak{R}$ in \eqref{RecursionInIntro} and the Hamiltonian operator $\mathcal{H}$ in
\eqref{Hamiltonian2}, is also Hamiltonian and compatible
with $\mathcal{H}$ if $\mathcal{K}$ is skew-symmetric.
	\begin{corollary}
		The operator $\mathcal{K} = \mathfrak{R} \mathcal{H}$ obtained
from the recursion operator in \eqref{RecursionInIntro} and the Hamiltonian operator in
\eqref{Hamiltonian2} is a Hamiltonian operator and it is compatible with
$\mathcal{H}$.
	\end{corollary}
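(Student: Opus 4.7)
The plan is to invoke Theorem 4 of \cite{42}, which asserts that if $\mathfrak{R}$ is a Nijenhuis operator, $\mathcal{H}$ is a Hamiltonian operator, and the composition $\mathcal{K} = \mathfrak{R}\mathcal{H}$ is skew-symmetric, then $\mathcal{K}$ is itself a Hamiltonian operator and forms a Hamiltonian pair with $\mathcal{H}$. Theorem \ref{Theorem1} provides the Nijenhuis property of $\mathfrak{R}$, and Theorem \ref{theo4} provides the Hamiltonianicity of $\mathcal{H}$. Thus the proof reduces entirely to verifying the skew-symmetry of $\mathcal{K}$.

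Since $\mathcal{H}^{\dagger}=-\mathcal{H}$ (being Hamiltonian), one has $\mathcal{K}^{\dagger} = \mathcal{H}^{\dagger}\mathfrak{R}^{\dagger} = -\mathcal{H}\mathfrak{R}^{\dagger}$, so the required skew-symmetry $\mathcal{K}^{\dagger}=-\mathcal{K}$ is equivalent to the operator identity
\begin{equation*}
\mathfrak{R}\,\mathcal{H} \;=\; \mathcal{H}\,\mathfrak{R}^{\dagger}.
\end{equation*}
My plan is to verify this identity directly from the factorizations already established. Using $\mathfrak{R} = w^{2}(A+BC^{-1}D)E^{-1}w^{-2}$ from \eqref{factorised2}, $\mathcal{H} = w^{2}(F - BC^{-1}B^{\dagger})w^{2}$ from \eqref{HamilDeco}, and the Ore property of the ring of difference operators (as exploited in the proofs of Theorems \ref{Theorem1} and \ref{theo4}), both sides can be converted into a right rational form $P\tilde{Q}^{-1}$; the identity then reduces to equality of difference operators, which can be checked coefficient by coefficient.

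The main obstacle will be the complexity of the symbolic manipulation. The product $\mathfrak{R}\mathcal{H}$ contains two nested $C^{-1}$ factors, and a controlled use of the Ore rearrangement (exactly as done when deriving the operators $G$ and $I$ in \eqref{II} and in Theorem \ref{theo4}) is needed to bring it into rational form without spurious denominators. Once in that form, the check is a routine, if tedious, computer-algebra verification, directly analogous to the Mathematica computations already employed in Section \ref{Sec2} and Section \ref{Sec2bis}. Given the skew-symmetry, the conclusions of Theorem 4 of \cite{42} apply verbatim: $\mathcal{K}$ is Hamiltonian and compatible with $\mathcal{H}$, completing the proof. One can moreover iterate this construction, since $\mathfrak{R}^{k}\mathcal{H}$ for $k\in\mathbb{N}$ will inherit skew-symmetry by the same identity applied recursively, yielding the full bi-Hamiltonian hierarchy announced in the introduction.
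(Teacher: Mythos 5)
Your proposal is correct and follows the paper's overall skeleton exactly — invoke Theorem 4 of \cite{42}, reduce everything to skew-symmetry of $\mathcal{K}=\mathfrak{R}\mathcal{H}$, using Theorems \ref{Theorem1} and \ref{theo4} as the hypotheses — but the actual verification of skew-symmetry is done by a genuinely different method. You propose to rationalize both sides of $\mathfrak{R}\mathcal{H}=\mathcal{H}\mathfrak{R}^\dagger$ via the Ore property and compare coefficients by computer algebra. The paper never rationalizes $\mathcal{K}$ at all: it expands $\mathcal{K}+\mathcal{K}^\dagger$ from the factorizations \eqref{factorised2} and \eqref{HamilDeco}, groups the result into four pieces $J^{(1)},\dots,J^{(4)}$ according to the placement of the nonlocal factor $C^{-1}$, introduces the auxiliary operators $\bar{B}=u_1+\mathcal{S}^{-1}u$ and $\bar{C}=\mathcal{S}w+w\mathcal{S}^{-1}$, and uses the two identities $u_1w_1\mathcal{S}^2-uw\mathcal{S}^{-1}=\frac{1}{2}\bigl(C\bar{B}^\dagger-\bar{C}B^\dagger\bigr)$ and $u_1w_1\mathcal{S}-u_{-1}w_{-1}\mathcal{S}^{-2}=\frac{1}{2}\bigl(B\bar{C}+\bar{B}C\bigr)$ to fold everything into the single sandwich $w^2BC^{-1}\bigl(w^2-w_1^2+\frac{1}{2}(C\bar{C}-\bar{C}C)\bigr)C^{-1}B^\dagger w^2$, whose middle factor vanishes identically since $C\bar{C}-\bar{C}C=2(w_1^2-w^2)$. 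What the paper's route buys is a short, human-checkable cancellation that sidesteps the nested-$C^{-1}$ blow-up you correctly identify as the main obstacle; what your route buys is generality and the elegant intertwining identity $\mathfrak{R}\mathcal{H}=\mathcal{H}\mathfrak{R}^\dagger$, which (as you note) immediately gives $(\mathfrak{R}^k\mathcal{H})^\dagger=-\mathfrak{R}^k\mathcal{H}$ for all $k$, a recursive consequence the paper states but does not derive. One caution on your plan: equality of two right fractions $P_1Q_1^{-1}=P_2Q_2^{-1}$ is not checked ``coefficient by coefficient'' on the fractions as given — you must first construct a common right multiple of $Q_1$ and $Q_2$ (which Ore guarantees exists) and compare the resulting numerators; stated with that refinement, your verification is sound, albeit computationally much heavier than the paper's.
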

	\begin{proof} From \eqref{factorised2} and \eqref{HamilDeco} we have
		\begin{equation}
			\mathcal{K} = \mathfrak{R}\mathcal{H}=w^2 \left(A + B^\dagger C^{-1} D\right)
E^{-1}\left(F - B C^{-1} B^\dagger\right) w^2,
			\label{previousexpression}
		\end{equation}
where we use the same definitions for operators involved as in \eqref{definition2} and in \eqref{Definitions2}.
It follows from Theorem 4 in \cite{42} that we only need to
check that the operator $\mathcal{K}$ is skew-symmetric to show that
it is a Hamiltonian operator compatible
with $\mathcal{H}$, that is, to show that $\mathcal{K}+\mathcal{K}^\dagger=0$.
		The expression \eqref{previousexpression} contains $C^{-1}$, we split the skew-symmetry condition, according to the
positions of $C^{-1}$, into four parts denoted by  $J^{(i)}, i=1, \cdots, 4$ as follows:
		\begin{subequations}
			\begin{align}
				& \mathcal{K} + \mathcal{K}^\dagger =
J^{(1)}+J^{(2)}+J^{(3)}+J^{(4)} \label{Jys1}\\
				& J^{(1)} = 2 w^2 B C^{-1}
\left(w^2-w_1^2\right) C^{-1} B^{\dagger} w^2\\
				& J^{(2)} = 2 w^2 B C^{-1} \left( u_1 w_1
\mathcal{S}^2- u w \mathcal{S}^{-1} \right) w^2 \label{Jys2}\\
				& J^{(3)} = 2 w^2 \left(u_1 w_1
\mathcal{S}-u_{-1}w_{-1} \mathcal{S}^{-2}\right) C^{-1} B^\dagger \label{Jys3}\\
				& J^{(4)} = 2 w^4 \left( w_{-1}^2 -w_1^2\right)
			\end{align}
			\label{Jys}
		\end{subequations}
Let us define the operators $\Bar{B}$ and $\Bar{C}$ similarly to $B$ and
$C$ in \eqref{previousexpression}
		\begin{equation*}
			\Bar{B} = u_1 + \mathcal{S}^{-1} u;\qquad \Bar{C} =
\mathcal{S} w + w \mathcal{S}^{-1}.
		\end{equation*}
Then we obtain the following identities
		\begin{subequations}
			\begin{align*}
				&  u_1 w_1 \mathcal{S}^2- u w \mathcal{S}^{-1} =
\frac{1}{2} \left( C \Bar{B}^\dagger-\Bar{C} B^\dagger \right);\\
				&  u_1 w_1 \mathcal{S} - u_{-1}w_{-1}
\mathcal{S}^{-2} = \frac{1}{2} \left(B \Bar{C}+\Bar{B} C\right),
			\end{align*}
		\end{subequations}
Substituting them into  $J^{(2)}$ \eqref{Jys2} and $J^{(3)}$ \eqref{Jys3}, respectively,  \eqref{Jys1} becomes
		\begin{equation*}
			\mathcal{K} + \mathcal{K}^\dagger  = w^2 B C^{-1}
\left(w^2-w_1^2 + \frac{1}{2} \left( C \Bar{C} - \Bar{C} C \right)\right) C^{-1}
B^{\dagger} w^2
		\end{equation*}
		which, by direct computation is zero, proving that $\mathcal{K}$
is skew-symmetric. Thus  operator $\mathcal{K}$ is a Hamiltonian operator
compatible with $\mathcal{H}$ defining a Hamiltonian pair
$(\mathcal{H},\mathcal{K})$.
\end{proof}
An immediate consequence of this result is that the operator
$\mathcal{R}^n\mathcal{H}$ is a Hamiltonian operator for all
$n \in \mathbb{Z}$. Thus every symmetry in the hierarchy for equation \eqref{1.1} is also Hamiltonian. Indeed, we can write its next symmetry \eqref{5.9} as
$$u_{t_2}=\mathcal{H} \delta_u u_1 u_{-1} w = \mathcal{R}\mathcal{H} \delta_u \ln(w) .$$

\section{Discussion and Further Work}\label{Sec4}
The main focus of this paper is to study the algebraic properties of the integrable equation
\eqref{1.1}, that is,
	\begin{equation*}
		u_t = w^2\left(u_2 w_1 - u_{-2}w_{-1}\right), \quad w = \sqrt{1+u^2} .
		\label{finale}
	\end{equation*}
We construct its recursion operator \eqref{RecursionInIntro} based on its Lax representation, and show that it is a Hamiltonian system with Hamiltonian operator $\cH$ defined by \eqref{Hamiltonian2}. Moreover, we obtain a hierarchy of Hamiltonian operators compatible  to $\cH$ following from the Nijenhuis property of the recursion operator.

There are the verifiable conditions to guarantee that a weakly nonlocal operator generates a local hierarchy starting from a proper seed \cite{82}. The recursion operator \eqref{RecursionInIntro} is not weakly nonlocal. We have not proven that it produces local symmetries even though the straightforward computation shows that starting from the equation itself, that is true for finite steps. A similar study as in \cite{4} is required in this case.

As mentioned in Introduction, the operator \eqref{RT} involving the reflection operator $\cT$ can be used to produce the hierarchy of infinite symmetries starting from equation itself just as  the recursion operator \eqref{RecursionInIntro}. This is true for other scalar integrable differential-difference equations. For instance, for the Volterra chain
$$
u_t=u (u_1-u_{-1}),
$$
its recursion operator is
$$
\mathcal{R}= u \cS +u_1 +u +u \cS^{-1} +u (u_1-u_{-1}) (\cS-1)^{-1} \frac{1}{u} .
$$
The corresponding $\mathcal{R}'$ is
$$
\mathcal{R}'= u (1-\cT) (\cS +1) u \cS (\cS-1)^{-1} \frac{1}{u}.
$$
We'll explore more to uncover whether there are any advantages to present in this form.

The equation \eqref{initialeq} can be written in a rational form
\begin{equation*}
 v_t=(1+v^2) \left( \prod_{k=1}^{p-1}\frac{1+v_k^2}{1-v_k^2} \frac{v_p}{1-v_p^2}-\prod_{k=1}^{p-1}\frac{1+v_{-k}^2}{1-v_{-k}^2}
 \frac{v_{-p}}{1-v_{-p}^2}\right),  \quad  p\in \mathbb{N}
\end{equation*}
via a point transformation \cite{96}\footnote{We would like to thank a reviewer to point out the reference.}
\begin{equation*}
 u = \frac{2 v}{1-v^2}.
\end{equation*}
All results obtained in terms of the variable $u$ including the Lax representation, recursion operators and Hamiltonian structures can be naturally passed to the variable $v$.

In principle, the methods developed in this paper can be used to construct a recursion operator for the entire family of integrable equations \eqref{squareq}. For any fixed $p$, we can write down the recursion operator although it is formed by more parts comparing to \eqref{RecursionInIntro}. We  have also done it for $p=3, 4$.  Until now we are not able to present it in a neat way as it has been done for the Narita-Itoh-Bogoyavlensky lattice in \cite{7}.

\subsection*{Acknowledgements}
This article is partially based upon work from COST Action CaLISTA CA21109 supported by COST (European Cooperation in Science and Technology). www.cost.eu.

\label{lastpage}
\end{document}